\newtheorem{theorem}{Theorem}[section]
\newtheorem{lemma}[theorem]{Lemma}
\theoremstyle{definition}
\newcommand{\tr}{\operatorname{tr}}
\newcommand{\I}{\mathcal{I}}
\newcommand{\rank}{\mathrm{rank}}
\title[AIMS: Average Information Matrix Splitting] %Use the shortened version of the full title
      {AIMS: Average Information Matrix Splitting}
\author[Shengxin Zhu]{}
\subjclass{Primary: 65J12, 62P10;Secondary: 65C60,65F99.}
 \keywords{Observed information matrix,
Fisher information matrix, Newton method,
linear mixed model,variance parameter estimation, Average Information}
 \email{Shengxin.Zhu@xjtlu.edu.cn}
\thanks{This research is supported by Foundation of LCP(6142A05180501), Jiangsu Science and Technology Basic Research Program (BK20171237), Key Program Special Fund of XJTLU (KSF-E-21, KSF-P-02), Research Development Fund of XJTLU (RDF-2017-02-23) and partially supported by NSFC (No.11771002, 11571047, 11671049, 11671051, 6162003, and 11871339).
}
\thanks{$^*$ Corresponding author: Shengxin Zhu}
\begin{document}

\maketitle
% Enter the first author's name and address:
\centerline{\scshape Shengxin Zhu$^*$}
\medskip
{\footnotesize
% please put the address of the first author
\centerline{Laboratory for Intelligent Computing and Financial Technology}
 \centerline{Department of Mathematics, Xi'an Jiaotong-Liverpool University }
 %  \centerline{}
   \centerline{Suzhou, 215123, P.R.China}
} % Do not forget to end the {\footnotesize by the sign }

\medskip

\centerline{\scshape Tongxiang Gu and Xingping Liu}
\medskip
{\footnotesize
 % please put the address of the second  and third author
 \centerline{Laboratory of Computational Physics}
 \centerline{Institute of Applied Physics and Computational Mathematics}
  \centerline{Beijing 100088, P.R.China}
}

\bigskip

% The name of the associate editor will be entered by an editorial staff
% "Communicated by the associate editor name" is not needed for special issue.
% \centerline{(Communicated by the associate editor name)}
%\maketitle

%The abstract of your paper
\begin{abstract}
For linear mixed models with co-variance matrices which are not linearly dependent on variance component parameters,
we prove that the average of the observed information and the Fisher information can be split into two parts. The essential part enjoys a simple and computational friendly formula, while the other part which involves a lot of computations is a random zero matrix and thus is negligible.
\end{abstract}

%The title of your section 1

\section{Introduction}

Many statistical methods require an estimation of unknown (co-)variance parameter(s).
The estimation is usually obtained by maximizing a log-likelihood function. In principle, one requires the \textit{observed information matrix}---the
negative Hessian matrix of the log-likelihood---to obtain a
maximum likelihood estimator according to the Newton-Raphson method \cite{Jenn76}. The
expected value of the observed information matrix is usually referred to as the
\textit{Fisher information matrix} or simply the \textit{information matrix}. It keeps
the essential information about unknown parameters and enjoys a simper formula. Therefore
it is widely used in many applications \cite{EH78}. The resulting algorithms is called the \textit{Fisher-scoring algorithm}
which is widely used in informetrics \cite{L87} and now is standard procedure in computational statistics \cite[p.30]{FS}.

The Fisher scoring algorithm is a success in simplifying the approximation of the Hessian
matrix of the log-likelihood. Still, evaluating elements of the Fisher information matrix remains as one of bottlenecks in a log-likelihood maximization procedure, which prohibits the use of Fisher scoring algorithm for large data sets.
In particular, the high-throughput technologies in biological science and engineering
mean that the size of data sets and the corresponding statistical models have suddenly
increased by several orders of magnitude. Further simplification of computational procedure is quite needed for applications of large scale statistical models, such as genome-wide association studies, which involves many
thousands parameters to be estimated \cite{WZW13}. %\cite{zhou12,zhang10,lip11,lis12}.

The aim of this short note is to provide a concise mathematical result: the average of the observed information and the Fisher information can be split into two parts; the essential part enjoys a simple formula and is easy to compute, while the other part which involves a lot of computations is a random zero matrix and thus is negligible.  Such a spitting and approximation provides significant reduction in computations. What we should mention is that
the average information idea has been proposed in \cite{john95} for (co)variance matrices which
linearly depend on variance parameters. It results in an efficient breeding algorithm in \cite{GTC95}, and followed by \cite{M97}. However, previous results assume that the variance-variance matrix should be linearly dependent on the underlying variance parameters.  Here we prove that similar results still be obtained even the variance-variance matrix is not linearly dependent on the underlying variance parameters.

\section{Preliminary}

Consider the following widely-used linear mixed models \cite{Chen19,Chen19b,Gao19,Zuo20,Wang20}
\begin{equation}
y=X\tau+Zu+\epsilon, \label{eq:LMM}
\end{equation}
where $y\in \mathbb{R}^{n\times 1}$ is the observation, $\tau\in
\mathbb{R}^{p\times 1}$ is the vector of fixed effects, $X \in
\mathbb{R}^{n\times p}$ is the design matrix which corresponds to
the fixed effects, $u \in \mathbb{R}^{b\times 1}$ is the vector of
unobserved random effects, $Z \in \mathbb{R}^{n\times b}$ is the
design matrix which corresponds to the random effects. $\epsilon \in
\mathbb{R}^{n\times 1}$ is the vector of residual errors. The random effects, $u$, and the residual errors, $\epsilon$,
follow multivariate normal distributions such that $E(u)=0$,
$E(\epsilon)=0$, $u\sim N(0, \sigma^2 G)$, $\epsilon \sim N(0,
\sigma^2 R)$ and
\begin{equation}
\text{var}\left[\begin{array}{c}
u\\
\epsilon
\end{array}\right]=\sigma^{2}\left[\begin{array}{cc}
G & 0\\
0 & R
\end{array}\right],
\end{equation}
where $G\in \mathbb{R}^{b\times b }$, $R \in \mathbb{R}^{n\times
n}$. Typically $G$ and $R$ are parameterized matrices with unknown parameters to
be estimated. Precisely, suppose that $G=G(\gamma)$, $R=R(\phi)$, and denote $\kappa=(\gamma, \phi)$, $\theta=(\sigma^2,\kappa)$.
Estimating our main concern, the variance parameters $\theta$,
requires a conceptually simple nonlinear iterative procedure:
%which can be computationally expensive to carry out. %\cite{ACW12}.
one has to maximize a
residual log-likelihood function of the form %\cite{PT71} %CT77,Har74,
\cite[p.252]{Searle06} %(for a concise derivation see \cite{V90})
\begin{equation}
\ell_R=\mathrm{const}-\frac{1}{2}\{ (n-\nu)\log\sigma^2 +\log
\det(H) +\log \det(X^TH^{-1}X) +\frac{y^TPy}{\sigma^2} \},
\label{eq:ellR}
\end{equation}
where $H=R(\phi)+ZG(\gamma)Z^T$, $\nu=\rank(X)$ and
$$P=H^{-1}-H^{-1}X(X^TH^{-1}X)^{-1}X^TH^{-1}.$$
Here we suppose $X$ is full rank, say, $\nu=p$.
The first derivatives of $\ell_R$ is referred to as the
\textit{scores} for the variance parameters $\theta:=(\sigma^2,
\kappa)^T$\cite[p.252]{Searle06}:
\begin{align}
s(\sigma^2)&=\frac{\partial \ell_R}{\partial \sigma^2}
=-\frac{1}{2}\left\{
\frac{n-\nu}{\sigma^2}-\frac{y^TPy}{\sigma^4}\right \}, \label{eq:ssigma}\\
s(\kappa_i)&=\frac{\partial \ell_R}{\partial
\kappa_i}=-\frac{1}{2}\left\{ \tr(P\frac{\partial H}{\partial
\kappa_i}) -\frac{1}{\sigma^2}y^TP \frac{\partial H}{\partial
\kappa_i}P y  \right\}.
\end{align}
where $\kappa=(\gamma^T, \phi^T)^T$.
We shall denote $$S(\theta)=(s(\sigma^2),s(\kappa_1), \ldots,
s(\kappa_m))^T.$$ The negative Hessian of the log-likelihood
function is referred to as the \textit{observed information matrix}. We
will denote the matrix as $\I_O$.
\begin{equation}
\I_O=  -
\begin{pmatrix}
 \frac{\partial^2 \ell_R}{\partial \sigma^2\partial \sigma^2} &
 \frac{\partial^2 \ell_R}{\partial \sigma^2 \partial \kappa_1} &
 \cdots &\frac{\partial^2 \ell_R}{\partial \sigma^2 \partial \kappa_m}
 &
 \\
\frac{\partial^2 \ell_R}{\partial \kappa_1\partial \sigma^2 }
&\frac{\partial^2 \ell_R}{\partial \kappa_1\partial \kappa_1} & \cdots
&
\frac{\partial^2 \ell_R}{\partial \kappa_1\partial \kappa_m} \\
\vdots

 & \vdots & \ddots & \vdots
 \\
\frac{\partial^2 \ell_R}{\partial \kappa_m\partial \sigma^2}

 &  \frac{\partial^2 \ell_R}{\partial \kappa_m\partial \kappa_1} &
 \cdots& \frac{\partial^2 \ell_R}{\partial \kappa_m\partial \kappa_m}
\end{pmatrix}.
\label{eq:IO}
\end{equation}

\begin{algorithm}[!t]
\caption{Newton-Raphson method to solve {$S(\theta)=0$.}}
\begin{algorithmic}[1]
\State {Give an initial guess of $\theta_0$} \For{ $k=0, 1, 2,
\cdots$ until convergence }
 \State{Solve $\I_O(\theta_k) \delta_k=S(\theta_k)$,}
 \State{$\theta_{k+1}=\theta_k+\delta_k$}
\EndFor
\end{algorithmic}
\label{alg:NR}
\end{algorithm}
Given an initial guess of the variance parameter $\theta_0$, a
standard approach to maximize $\ell_R$ or find the root of the
score equation $S(\theta)=0$ is the Newton-Raphson method
(Algorithm \ref{alg:NR}), which requires elements of the observed information
matrix.

In particular,
\begin{equation}
\I_O(\kappa_i,\kappa_j) = \frac{\tr(P\ddot{H}_{ij})- \tr(P\dot{H}_iP \dot{H}_j)}{2}
 +\frac{2y^TP\dot{H}_iP\dot{H}_j Py -y^T
 P\ddot{H}_{ij}Py}{2\sigma^2}, \label{eq:OKK}
\end{equation}
where $\dot{H}_i=\frac{\partial H}{\partial \kappa_i}$ and
$\ddot{H}_{ij}=\frac{\partial ^2 H}{\partial \kappa_i \kappa_j}$. Each element $\I_O(\kappa_i,\kappa_j)$
involves two computationally intensive trace terms, which prohibits the practical use of the (exact) Newton-Raphson method for large data sets.

In practice, the
\textit{Fisher information matrix}, $\I=E(\I_O)$, is preferred. The elements of the Fisher information
matrix have simper forms than these of the observed information
matrix for example % (see column 3 in Table \ref{tab:splitting}).
\begin{equation}
\I(\kappa_i, \kappa_j) = \frac{\tr{P\dot{H}_iP\dot{H}_j}}{2}. \label{eq:fisher}
\end{equation}
The corresponding algorithm is referred to as
the \textit{Fisher scoring algorithm} \cite{L87}. Still the element
$\I(\kappa_i, \kappa_j)$ of the Fisher information matrix
involves the computationally expensive trace terms.

When the variance-variance matrix $H$ is linearly dependent on the variance parameter, say $\ddot{H}_{ij}=0$, researchers noticed that  the \textit{average information matrix}
\begin{equation}
\frac{\I(\kappa_i, \kappa_j)+\I_O(\kappa_i,\kappa_j)}{2}
=\frac{y^TPH_iPH_jPy}{2\sigma_2}
\end{equation}
enjoys a simpler and more computational friendly formula \cite{GTC95}\cite{john95}.
\begin{equation}
\I_A = \frac{y^TP\dot{H}_iP\dot{H}_jPy}{2\sigma^2}.
\end{equation}

For more general covariance matrices when $\ddot{H}_{ij}\neq 0$, we still have the following nice property by the classical matrix splitting \cite[p.94]{Varga99}. The average information matrix can be split into two parts as follows \cite{ZGL16a}:
\begin{equation}
\frac{\mathcal{I}(\kappa_i, \kappa_j)+\mathcal{I}_O(\kappa_i,\kappa_j)}{2}
=\underbrace{\frac{y^TP\dot{H}_iP\dot{H}_jPy}{2\sigma_2}}_{\mathcal{I}_A(\kappa_i,\kappa_j)} \\
+\underbrace{\frac{\tr(P\ddot{H}_{ij})-y^TP\ddot{H}_{ij}Py/\sigma^2
}{4}}_{\mathcal{I}_Z(\kappa_i,\kappa_j)}.% \label{eq:AKK}
\end{equation}
Such a splitting enjoys a nice property which is presented as our main result.

\section{Main result}

\begin{theorem}
Let $\I_O$ and $\I$ be the observed information matrix and the Fisher information matrix for the residual
log-likelihood of linear mixed model respectively, then the average of the observed information matrix and the Fisher information matrix can be split as $\frac{\I_O+\I}{2}=\I_A+ I_Z$, such that the expectation of $\I_A$ is the Fisher information matrix and $E({\I}_{Z})=0$.  \label{cor1}
\end{theorem}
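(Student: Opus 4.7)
The plan is to verify the decomposition entry by entry and then confirm the two expectation claims. The backbone is a small set of standard identities: differentiating the matrix inverse gives $\partial P/\partial \kappa_i=-P\dot{H}_iP$; the projection-like identity $PX=0$ follows from the definition of $P$; and a direct calculation yields $PHP=P$. Combined with the quadratic-form expectation $E(y^TAy)=\sigma^2\tr(AH)+(X\tau)^TA(X\tau)$ for $y\sim N(X\tau,\sigma^2 H)$, these reduce every computation to routine algebra.

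First I would handle the informative block $(\kappa_i,\kappa_j)$. Differentiating $s(\kappa_j)$ once more in $\kappa_i$ and substituting $\partial_{\kappa_i}P=-P\dot{H}_iP$ reproduces equation~\eqref{eq:OKK}. Taking its expectation, $PX=0$ kills the mean contributions while $PHP=P$ collapses $E(y^TP\dot{H}_iP\dot{H}_jPy)=\sigma^2\tr(P\dot{H}_iP\dot{H}_j)$ and $E(y^TP\ddot{H}_{ij}Py)=\sigma^2\tr(P\ddot{H}_{ij})$, leaving $\I(\kappa_i,\kappa_j)=\tfrac12\tr(P\dot{H}_iP\dot{H}_j)$ after the two $\tr(P\ddot{H}_{ij})$ terms cancel. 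Adding $\I_O$ and $\I$ and dividing by two, the $\tr(P\dot{H}_iP\dot{H}_j)$ traces cancel entirely, isolating the pure quadratic-form piece $\I_A(\kappa_i,\kappa_j)=y^TP\dot{H}_iP\dot{H}_jPy/(2\sigma^2)$ from the residual $\hat{\I}_Z(\kappa_i,\kappa_j)=[\tr(P\ddot{H}_{ij})-y^TP\ddot{H}_{ij}Py/\sigma^2]/4$. The same $PHP=P$ identity then immediately yields $E(\hat{\I}_Z(\kappa_i,\kappa_j))=0$ and $E(\I_A(\kappa_i,\kappa_j))=\I(\kappa_i,\kappa_j)$.

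The remaining entries involving $\sigma^2$ are simpler since no second $\kappa$-derivative appears. For the diagonal, $\I_O(\sigma^2,\sigma^2)=-(n-p)/(2\sigma^4)+y^TPy/\sigma^6$ and $\I(\sigma^2,\sigma^2)=(n-p)/(2\sigma^4)$ via $\tr(PH)=n-p$, so the deterministic part cancels in the average and one takes $\I_A(\sigma^2,\sigma^2)=y^TPy/(2\sigma^6)$ with $\hat{\I}_Z=0$; the cross entries $(\sigma^2,\kappa_i)$ are handled analogously, with $\I_A(\sigma^2,\kappa_i)=y^TP\dot{H}_iPy/(2\sigma^4)$ whose expectation is $\tr(P\dot{H}_i)/(2\sigma^2)=\I(\sigma^2,\kappa_i)$. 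The only genuine bookkeeping nuisance will be sign-tracking when differentiating $P$, since two chain-rule applications to $H^{-1}$ are hidden inside the projector term; beyond that, every expectation routes through either $PX=0$ or $PHP=P$, and the theorem follows by direct substitution across all blocks of the information matrix.
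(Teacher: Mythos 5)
Your proposal is correct and follows essentially the same route as the paper: the same key identities ($PX=0$, $PHP=P$, $\tr(PH)=n-\nu$, $\dot{P}_i=-P\dot{H}_iP$, and the quadratic-form expectation) drive the entrywise computation of $\I_O$ and $\I$, the cancellation of the $\tr(P\dot{H}_iP\dot{H}_j)$ terms in the average, and the identification of $\I_A$ and $\hat{\I}_Z$ with the stated expectations. No gaps.
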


Such a splitting aims to remove computationally expensive and
negligible terms so that a Newton-like method is applicable for
large data which involves thousands of fixed and random effects.
It keeps the essential information in the observed information matrix.  In this sense,
$\I_{A}$ is a good approximation which is data-dependent (on $y$) to the data-independent
Fisher information.
An Quasi-Newton iterative procedure is obtained by replacing
$\I_O$ with $\I_A$ in Algorithm \ref{alg:NR}.

\section*{Proof of the main result}
\subsection{Basic Lemma}

\begin{lemma}
\label{lemma1}
Let $y\sim N(X\tau, \sigma^2H)$,
be a random variable
and $H$ is a
symmetric positive definite matrix, then
$$P=H^{-1}-H^{-1}X(X^TH^{-1}X)^{-1}X^TH^{-1}$$ is a weighted projection
matrix such that
\begin{enumerate}
  \item $PX=0$;
  \item $PHP=P$;
  \item $\tr(PH)=n-\nu$, where $\operatorname{rank}(X)=\nu$;
  \item $PE(yy^T)=\sigma^2PH$.
\end{enumerate}
\label{lem:P}
\end{lemma}
\begin{proof}
The first two terms can be verified by direct computation. Since
H is a positive definite matrix, there exists $H^{1/2}$ such that
\begin{equation*}
\tr(PH)  =\tr(H^{1/2}PH^{1/2})=\tr(I-\hat{X}(\hat{X}^T\hat{X})^{-1}
\hat{X})
 =n-\rank(\hat{X})=n-\nu.
\end{equation*}
where $\hat{X}=H^{-1/2}X$. The 4th item follows because
\begin{equation*}
P
E(yy^T)=P(\mathrm{var}(y)+X\tau
(X\tau)^T)
=\sigma^2PH+PX\tau(X\tau)^T=\sigma^2PH.
\end{equation*}
\end{proof}

\begin{lemma}
Let $H$ be a parametric matrix of $\kappa$, and $X$ be an constant matrix, then the partial derivative of
the projection matrix $$ P=H^{-1}-H^{-1}X(X^TH^{-1}X)^{-1}X^TH^{-1} $$
with respect to $\kappa_i$ is
$
\dot{P}_i=-P\dot{H}_iP,
$
where $\dot{P}_i=\frac{\partial P}{\partial \kappa_i}$ and $\dot{H}_i=\frac{\partial H}{\partial \kappa_i}.$
\label{lem:PD}
\end{lemma}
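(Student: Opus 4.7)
My plan is to differentiate the explicit formula for $P$ directly, relying only on the standard matrix-calculus identity $\partial M^{-1}/\partial\kappa_i=-M^{-1}\dot{M}_i M^{-1}$. Writing $W:=H^{-1}$ and $U:=X^{T}H^{-1}X$ for brevity, we have $P=W-WXU^{-1}X^{T}W$. Applying the product rule yields four summands in $\dot{P}_i$: one from the isolated $W$, two from the outer $W$ factors in the second term, and one from the middle $U^{-1}$. The identity above immediately gives $\dot{W}_i=-W\dot{H}_i W$ and, since $\dot{U}_i=-X^{T}W\dot{H}_i WX$, also $\dot{(U^{-1})}_i=U^{-1}X^{T}W\dot{H}_i WXU^{-1}$. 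Every summand therefore ends up as $\dot{H}_i$ sandwiched between either $W$ or the block $T:=WXU^{-1}X^{T}W$, producing the compact expression $\dot{P}_i=-W\dot{H}_i W+W\dot{H}_i T+T\dot{H}_i W-T\dot{H}_i T$.

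The next step is to exploit the obvious identity $P=W-T$, equivalently $W=P+T$, to re-expand the $-W\dot{H}_i W$ summand as $-(P+T)\dot{H}_i(P+T)$. Combined with the other three summands, the cross terms involving $P\dot{H}_i T$ and $T\dot{H}_i P$ cancel in pairs, and the four copies of $T\dot{H}_i T$ (with signs $-,+,+,-$) likewise vanish. The only survivor is $-P\dot{H}_i P$, which is the desired formula.

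The main obstacle is purely bookkeeping: the four summands must be aligned correctly, and one must spot that the substitution $W=P+T$ turns the sum into a telescoping cancellation. No further ingredients are required; Lemma~\ref{lem:P} enters only implicitly, to guarantee that $P$ remains a smooth function of $\kappa$ wherever $H(\kappa)$ and $X^{T}H^{-1}X$ are invertible. An equally short alternative would be to differentiate the two identities $PHP=P$ and $PX=0$ simultaneously and verify that $-P\dot H_i P$ satisfies both, but uniqueness would still have to be argued from the rank structure of $P$, so the direct route above seems preferable.
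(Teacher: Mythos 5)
Your proposal is correct and follows essentially the same route as the paper: differentiate the explicit formula for $P$ using $\partial M^{-1}/\partial\kappa_i=-M^{-1}\dot M_iM^{-1}$, obtain the four summands $-W\dot H_iW+W\dot H_iT+T\dot H_iW-T\dot H_iT$, and recombine them into $-P\dot H_iP$. The only difference is cosmetic bookkeeping --- you substitute $W=P+T$ and cancel termwise, whereas the paper factors the same four summands directly as $-(W-T)\dot H_i(W-T)$.
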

\begin{proof} See Lemma B.3 \cite{GD11}.
Using the derivatives of the inverse of a matrix
$$
\frac{\partial A^{-1}}{\partial \kappa_i}=-A^{-1} \frac{\partial A}{\partial \kappa_i}A^{-1}.
$$
we have
\begin{align*}
\dot{P}_i =& \frac{\partial}{\partial \kappa_i}(H^{-1}-H^{-1}X(X^TH^{-1}X)^{-1}X^TH^{-1}) \\
  = & -H^{-1}\dot{H}_iH^{-1}+H^{-1}\dot{H}_iH^{-1}X(X^TH^{-1}X)^{-1}X^TH^{-1} \\
& -H^{-1}X(X^TH^{-1}X)^{-1}X^TH^{-1}\dot{H}_i  H^{-1}X(X^TH^{-1}X)^{-1}X^TH^{-1} \\
& + H^{-1}X(X^TH^{-1}X)^{-1}X^TH^{-1}\dot{H}_iH^{-1} \\
=&-H^{-1}\dot{H}_iP+H^{-1}X(X^TH^{-1}X)^{-1}X^TH^{-1}\dot{H}_iP
=-P\dot{H}_iP.\\
\end{align*}
\end{proof}

\subsection{Formulae of the observed information matrix}
\begin{lemma}
The element of the observed information matrix for the residual
log-likelihood \eqref{eq:ellR} is given by
\begin{align}
\I_O(\sigma^2,\sigma^2) &=
\frac{y^TPy}{\sigma^6}-\frac{n-\nu}{2\sigma^4}, \label{eq:ISS} \\
\I_O(\sigma^2,\kappa_i) &=
\frac{1}{2\sigma^4}y^TP\dot{H}_iPy , \label{eq:ISK}\\
\I_O(\kappa_i,\kappa_j) & = \frac{1}{2}\left\{\tr(P\dot{H}_{ij})-
\tr(P\dot{H}_iP \dot{H}_j)
 \right\}
  +\frac{1}{2\sigma^2}\left\{ 2y^TP\dot{H}_iP\dot{H}_j Py -y^T
 P\ddot{H}_{ij}Py\right\}.
\label{eq:IKK}
\end{align}
where $\dot{H}_i=\frac{\partial H}{\partial \kappa_i}$,
$\ddot{H}_{ij}=\frac{\partial^2 H}{\partial \kappa_i \partial \kappa_j}$.
\end{lemma}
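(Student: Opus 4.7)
The plan is direct differentiation of the scores in \eqref{eq:ssigma} and the accompanying formula for $s(\kappa_i)$. Because $\I_O$ is the negative Hessian of $\ell_R$, each entry is obtained via $\I_O(\theta_a,\theta_b)=-\partial s(\theta_a)/\partial\theta_b$, and the only nontrivial ingredient is Lemma \ref{lem:PD}, which supplies $\dot P_i=-P\dot H_iP$. The remaining work is routine bookkeeping with the product and chain rules together with the cyclic invariance of the trace.

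For $\I_O(\sigma^2,\sigma^2)$, I use that $P$ does not depend on $\sigma^2$, so $s(\sigma^2)$ depends on $\sigma^2$ only through the scalar powers $1/\sigma^2$ and $1/\sigma^4$. Treating $\sigma^2$ as a single variable, differentiating and negating yields \eqref{eq:ISS} immediately. For $\I_O(\sigma^2,\kappa_i)$, since $P$ again has no $\sigma^2$ dependence, differentiating $s(\sigma^2)$ in $\kappa_i$ kills the $(n-p)/\sigma^2$ term and replaces $y^TPy$ by $y^T\dot P_i y=-y^TP\dot H_iPy$ via Lemma \ref{lem:PD}; negating produces \eqref{eq:ISK}.

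The main work is $\I_O(\kappa_i,\kappa_j)$. I differentiate $s(\kappa_i)$ in $\kappa_j$ term by term. The trace piece contributes $\tr(\dot P_j\dot H_i)+\tr(P\ddot H_{ij})=\tr(P\ddot H_{ij})-\tr(P\dot H_iP\dot H_j)$ after applying Lemma \ref{lem:PD} and the cyclic property of $\tr$. Expanding $\partial(P\dot H_iP)/\partial\kappa_j$ by the product rule gives three contributions; the two outer terms, $-y^TP\dot H_jP\dot H_iPy$ and $-y^TP\dot H_iP\dot H_jPy$, are scalars equal to their own transposes, and by the symmetry of $P$ and the $\dot H$'s they collapse into $-2y^TP\dot H_iP\dot H_jPy$, while the middle term contributes $+y^TP\ddot H_{ij}Py$. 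Collecting everything with the overall sign coming from $-\partial/\partial\kappa_j$ and the factor $1/\sigma^2$ then delivers \eqref{eq:IKK}.

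The only step requiring mild care is the symmetrization that folds the two outer product-rule terms into a single coefficient; the rest is mechanical differentiation. No probabilistic tools or expectation identities enter the argument here, since the lemma concerns the observed information rather than its expectation; Lemma \ref{lem:P} will only be needed later when the splitting theorem is proved.
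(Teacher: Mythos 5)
Your proposal is correct and follows essentially the same route as the paper: differentiate the scores, invoke Lemma \ref{lem:PD} for $\dot P_j=-P\dot H_jP$, use cyclicity of the trace for the first piece of \eqref{eq:IKK}, and use the symmetry of $P$, $\dot H_i$, $\dot H_j$ to merge the two outer product-rule terms into $2y^TP\dot H_iP\dot H_jPy$. The sign bookkeeping checks out throughout, so nothing further is needed.
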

 \begin{proof} See Result 4 in \cite{GD11}.
 The result in \eqref{eq:ISS} is standard according to the definition. The result in
 \eqref{eq:ISK} follows from the result in Lemma \ref{lem:PD} if one uses the score in \eqref{eq:ssigma}.
 The first term in \eqref{eq:IKK} follows because
 \begin{align*}
 \frac{\partial \tr(P\dot{H}_i)}{\partial
 \kappa_j}&=tr(P\ddot{H}_{ij})+\tr(\dot{P}_j\dot{H}_i)=\tr(P\ddot{H}_{ij})-\tr(P\dot{H}_jP\dot{H}_i) \quad ( \dot{P}_j=-P\dot{H}_jP).
 \end{align*}
 The second term in \eqref{eq:IKK} follows because of
 using the result in Lemma \ref{lem:PD},
 we have
 \begin{equation}
 -\frac{\partial (P\dot{H}_iP)}{\partial
 \kappa_j}=P\dot{H}_jP\dot{H}_iP-P\ddot{H}_{ij}P+P\dot{H}_iP\dot{H}_jP.
 \end{equation}
 Further note that $\dot{H}_i$, $\dot{H}_j$ and $P$ are symmetric.
The second term in \eqref{eq:IKK} follows because of
 $$y^TP\dot{H}_iP\dot{H}_jPy=y^TP\dot{H}_jP\dot{H}_iPy. $$

 \end{proof}
\subsection{Formulae of the Fisher information matrix}

The \textit{Fisher information matrix}, $\I$, is the
expected value of the observed information matrix, $ \I=E(\I_O).$
The Fisher information matrix enjoys a simpler formula than the
observed information matrix and provides the essential information provided by the data, and thus it
is a natural approximation to the negative Jacobian matrix.

\begin{lemma}
The elements of the Fisher information matrix for the residual
log-likelihood function in \eqref{eq:ellR} are given by
\begin{align}
\I(\sigma^2,\sigma^2) &=E(\I_O(\sigma^2,\sigma^2))=\frac{\tr(PH)}{2\sigma^4}=\frac{n-\nu}{2\sigma^4}, \label{eq:FISS}\\
\I(\sigma^2, \kappa_i)
&=E(\I_O(\sigma^2,\kappa_i))=\frac{1}{2\sigma^2}
\tr(P\dot{H}_i),\label{eq:FISK} \\
\I(\kappa_i, \kappa_j)
&=E(\I_O(\kappa_i,\kappa_j))=\frac{1}{2}\tr(P\dot{H}_iP\dot{H}_j)
.\label{eq:FIKK}
\end{align}
\end{lemma}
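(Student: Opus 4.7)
The plan is to obtain each of the three formulae by taking the expectation of the corresponding observed-information element derived in the previous lemma, and then simplifying with the identities collected in Lemma~\ref{lem:P}. The only non-trivial probabilistic input is the standard identity $E(y^T A y)=\tr(A\,E(yy^T))$ for any symmetric $A$; everything else is matrix algebra built on $PHP=P$, $PX=0$, and $PE(yy^T)=\sigma^2 PH$.

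First I would handle $\I(\sigma^2,\sigma^2)$. The only random term in \eqref{eq:ISS} is $y^T P y/\sigma^6$, whose expectation is $\tr(P\,E(yy^T))/\sigma^6=\sigma^2\tr(PH)/\sigma^6=(n-\nu)/\sigma^4$ by items~3 and~4 of Lemma~\ref{lem:P}. Subtracting the deterministic $(n-p)/(2\sigma^4)$ (recall $p=\nu$ by assumption) leaves the claimed $(n-\nu)/(2\sigma^4)$. Next, for $\I(\sigma^2,\kappa_i)$, I would compute
\[
E(y^T P \dot H_i P y)=\tr\bigl(P\dot H_i P\cdot E(yy^T)\bigr)=\sigma^2\tr(P\dot H_i PHP)=\sigma^2\tr(P\dot H_i),
\]
where the last step uses $PHP=P$ together with the cyclic property of the trace. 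Dividing by $2\sigma^4$ gives \eqref{eq:FISK}.

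For the mixed second derivative \eqref{eq:FIKK} I would expand the expectation of \eqref{eq:IKK} term by term. The trace terms $\tr(P\ddot H_{ij})$ and $\tr(P\dot H_i P\dot H_j)$ are deterministic, so the task is to evaluate the two quadratic forms. Using the same trick,
\[
E(y^T P\dot H_i P\dot H_j P y)=\sigma^2\tr(P\dot H_i P\dot H_j PHP)=\sigma^2\tr(P\dot H_i P\dot H_j),
\]
and similarly $E(y^T P\ddot H_{ij} P y)=\sigma^2\tr(P\ddot H_{ij})$. Substituting these into \eqref{eq:IKK} I expect the $\tr(P\ddot H_{ij})$ contributions and half of the $\tr(P\dot H_i P\dot H_j)$ contributions to cancel, leaving exactly $\tfrac12\tr(P\dot H_i P\dot H_j)$.

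The main obstacle is not deep; it is simply the careful bookkeeping of signs and the repeated appeal to $PHP=P$ to collapse products such as $P\dot H_i P\dot H_j PHP$ down to $P\dot H_i P\dot H_j$. Once this simplification is applied consistently, all three formulae drop out, and as a byproduct the expression $\tr(PH)=n-\nu$ from item~3 of Lemma~\ref{lem:P} makes the $(\sigma^2,\sigma^2)$ entry manifestly the expected variance parameter count divided by $2\sigma^4$.
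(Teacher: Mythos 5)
Your proposal is correct and follows essentially the same route as the paper: take expectations of the observed-information entries, use $PX=0$ and $PE(yy^T)=\sigma^2PH$ to turn each quadratic form into a trace, then collapse with the cyclic property and $PHP=P$. One minor slip worth fixing: the intermediate traces should be $\sigma^2\tr(P\dot{H}_iPH)$ and $\sigma^2\tr(P\dot{H}_iP\dot{H}_jPH)$, which you cycle to $\tr(PHP\dot{H}_i)$ and $\tr(PHP\dot{H}_iP\dot{H}_j)$ before applying $PHP=P$; as written, $\tr(P\dot{H}_iPHP)$ collapses to $\tr(P\dot{H}_iP)$, which is not $\tr(P\dot{H}_i)$ since $P$ itself is not idempotent --- the final formulae are unaffected.
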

\begin{proof} The formulas can be found in \cite{Searle06}.
 Here we supply an alternative proof.
First note that $PX=0$, and according to Lemma \ref{lemma1}
\begin{align}
PE(yy^T)&=P(\sigma^2H+X\tau (X\tau)^T)=\sigma^2PH.
\label{eq:PEyy}
\end{align}
Then
\begin{align}
E(y^TPy)&=E(\tr(Pyy^T))=\tr(PE(yy^T))
 =\sigma^2\tr(PH) =(n-\nu)\sigma^2. \label{eq:Eypy}
\end{align}
%where $\rank(L_2)=n-\rank(X)$ due to $L_2^TX=0$.
Therefore
\begin{equation}
E(\I_O(\sigma^2,\sigma^2))=\frac{E(y^TPy)}{\sigma^6}-\frac{n-\nu}{2\sigma^4}=\frac{n-\nu}{2\sigma^4}.
\end{equation}
Second, we  notice that $PHP=P$. Applying the procedure in
\eqref{eq:Eypy}, we have
\begin{align}
E(y^TP\dot{H}_iPy)&= \tr(P\dot{H}_iPE(yy^T))  =\sigma^2\tr(P\dot{H}_iPH)  \nonumber \\
                  &=\sigma^2\tr(PHP\dot{H}_i) =\sigma^2\tr(P\dot{H}_i),  \label{eq:yphpy}\\
E(y^TP\dot{H}_iP\dot{H}_jPy) &=
\sigma^2\tr(P\dot{H}_iP\dot{H}_jPH) \nonumber \\
&=\sigma^2\tr(PHP\dot{H}_iP\dot{H}_j)
 =\sigma^2\tr(P\dot{H}_iP\dot{H}_j), \label{eq:yphphpy}\\
E(y^TP\ddot{H}_{ij}Py)&=\sigma^2\tr(P\ddot{H}_{ij}PH)=\sigma^2\tr(P\ddot{H}_{ij}).
\label{eq:yphhpy}
\end{align}
Substitute \eqref{eq:yphpy} into \eqref{eq:ISK}, we obtain
\eqref{eq:FISK}.   Substitute \eqref{eq:yphphpy} and
\eqref{eq:yphhpy} to \eqref{eq:IKK}, we obtain \eqref{eq:FIKK}.
% \flushright \qed
\end{proof}

Using the Fishing information matrix as an approximation to the negative Jacobian results in the widely-used
Fisher-scoring algorithm \cite{L87}.
\subsection{Proof of the main result}
\begin{proof}
Let
\begin{align}
\I_A(\sigma^2,\sigma^2)&=\frac{1}{2\sigma^6}y^TPy;  \label{eq:ASS}\\
\I_A(\sigma^2,\kappa_i)&=\frac{1}{2\sigma^4}y^TP\dot{H}_iPy;\label{eq:ASK}\\
\I_A(\kappa_i,\kappa_j)&=\frac{1}{2\sigma^2}y^TP\dot{H}_iP\dot{H}_jPy;
\label{eq:AKK}
\end{align}
then we have
\begin{align}
\I_Z(\sigma^2,\sigma^2)&=0, \\
\I_Z(\sigma^2,\kappa_i)&=\frac{tr(P\dot{H}_i)}{4\sigma^2}-\frac{y^TP\dot{H}_iPy}{4\sigma^4}, \\
\I_Z(\kappa_i,\kappa_j)& =\frac{\tr(P\ddot{H}_{ij})-y^TP\ddot{H}_{ij}Py/\sigma^2,
}{4}
\end{align}
Apply the result in \eqref{eq:Eypy}, we have
\begin{equation}
E(\I_A(\sigma^2,\sigma_2))=\frac{(n-\nu)}{2\sigma^4}=\I(\sigma^2,\sigma^2).
\end{equation}
Apply the result in \eqref{eq:yphpy}, we have
\begin{equation}
E(\I_A(\sigma^2,\kappa_i))=\frac{\tr(P\dot{H}_i)}{2\sigma^2} \text{ and }E(\I_Z(\sigma^2,\kappa_i))=0.
\end{equation}
Apply the result in \eqref{eq:yphphpy}, we have
\begin{equation}
E(\I_A(\kappa_i,\kappa_j))=\frac{\tr(P\dot{H}_iP\dot{H}_j)}{2}=\I(\kappa_i,\kappa_j)
\end{equation}
and $E(\I_Z(\kappa_i,\kappa_j))=0$.

\end{proof}

\section{Discussion}
The average information splitting is one of the key techniques to reduce computation in the maximum likelihood methods \cite{WZW13}, other techniques like sparse inversion (see the state-of-art of the sparse inversion algorithm \cite{ZW19}) should also be implemented to evaluate the score of the log-likelihood. More details can be found in the review report \cite{ZW18}.
Since  Fisher information matrix is preferred not only in finding the variance of an
estimator and in Bayesian inference \cite{MN04}, but also in analyzing the asymptotic behavior
of maximum likelihood estimates \cite{EF76,Z97,Z98}.  Besides the
traditional application fields like genetical theory of natural
selection and breeding \cite{F30},  many other fields including theoretical physics and information geometry also use
the Fisher information matrix theory \cite{HS14}\cite{JJK04}\cite{PRE11}\cite{V07}. Therefore the average information matrix splitting techniques also provides promise in these directions.

\section*{Acknowledgments}
We would like to thank the anonymous reviewers for some constructive feedback.

%\bibliographystyle{AIMS}
%\bibliography{aimsref}
%\begin{comment}
\providecommand{\href}[2]{#2}
\providecommand{\arxiv}[1]{\href{http://arxiv.org/abs/#1}{arXiv:#1}}
\providecommand{\url}[1]{\texttt{#1}}
\providecommand{\urlprefix}{URL }

%\end{comment}
\medskip
% The data information below will be filled by AIMS editorial staff
Received xxxx 20xx; revised xxxx 20xx.
\medskip

\end{document}